\documentclass[%
 reprint,
 amsmath,amssymb,
 aps,
 prb,
,colorlinks=true,linkcolor=apsblue, citecolor=apsblue, urlcolor=apsblue]{revtex4-2}

\usepackage{graphicx}% Include figure files
\usepackage{dcolumn}% Align table columns on decimal point
\usepackage{bm}% bold math
\usepackage{amsthm}% theorem environments
\usepackage{mathtools}
\usepackage{booktabs}
\usepackage{hyperref}
\usepackage{cleveref}
\usepackage{xcolor}
\definecolor{apsblue}{rgb}{0,0,0.8}

\crefname{table}{table}{tables}
\Crefname{table}{Table}{Tables}
\crefname{figure}{figure}{figures}
\Crefname{figure}{Figure}{Figures}
\crefname{equation}{Eq.}{Eqs.}
\Crefname{equation}{Equation}{Equations}
\crefname{section}{Sec.}{Secs.}
\Crefname{section}{Section}{Sections}
\crefname{appendix}{Appendix}{Appendices}
\Crefname{appendix}{Appendix}{Appendices}

\newcommand{\bdag}[1]{\hat b_{#1}^{\dagger}}
\newcommand{\bop}[1]{\hat b_{#1}}
\newcommand{\nop}[1]{\hat n_{#1}}
\newcommand{\ket}[1]{|#1\rangle}
\newcommand{\bra}[1]{\langle #1|}

\newcommand{\Tr}{\operatorname{Tr}}
\newcommand{\Hi}{\mathcal{H}}
\newcommand{\Fo}{\mathcal{F}}
\newcommand{\bd}{\partial A}
\newcommand{\Bin}{\operatorname{Bin}}

\theoremstyle{plain}
\newtheorem{theorem}{Theorem}

\theoremstyle{definition}

\theoremstyle{remark}

\newcommand{\fullfigref}[2]{\hyperref[#1]{\ref*{#1}(#2)}}

\begin{document}

\preprint{APS/123-QED}
%\nocite{apsrev42Control}
\title{Odd/Even or Half?  Entanglement Anomaly in the Bose-Hubbard model}

\author{Humberto E. Hern\'andez-L\'opez}%
 \email{humbertoe@estudiantes.fisica.unam.mx}
 
 \author{Santiago F. Caballero-Benitez}
\email{scaballero@fisica.unam.mx}

\affiliation{%
 Departamento de F\'\i sica Cu\'antica y Fot\'onica, LSCSC-LANMAC, Instituto de F\'\i sica, Universidad Nacional Aut\'onoma de M\'exico, Ciudad de M\'exico 04510, Mexico
}%

%\date{\today}
\setcitestyle{super}
\makeatletter
\renewcommand\@biblabel[1]{\textsuperscript{#1.}}
\makeatother

\begin{abstract}
The area law relates the bipartite entanglement entropy of a quantum many-body ground state to the size of the boundary between the subsystems, but the geometry of this boundary is rarely discussed. We inspect this in the 1D Bose-Hubbard model at fixed density by comparing four spatial bipartitions of the periodic lattice: first half, second half, even sites, and odd sites; sharing the same number of sites but differing in how the boundary is arranged. 
We find  analytical limits with perturbation theory: in the Mott insulator the contiguous cut obeys the area lay while the alternating cut obeys a volume law $S\propto N_s$, in this sense an anomaly,  and for the superfluid both cuts colapse to the binomial saturation due to delocalization of the state. 
We formulate these limits as a statement about the many-body problem using a generalized slave-boson approach based on mean-field with quantum fluctuations while verifying with Exact Diagonalization (ED) for small lattice sizes and Densitiy Matrix Renormalization Group (DMRG) simulations for $N_s\gg 1$. The slave-boson Gaussian ground state allows to compute the entanglement entropy from a reduced correlation matrix for any desired bipartition consistent with ED and DMRG results. 
Using slave bosons the  computational cost is set by the local cutoff $n_{\max}$ rather than the Hilbert space dimension, so we can reach lattice sizes far beyond ED. 
Our method is capable of establishing the partition-dependent scaling laws as a many-body feature, not only a finite-size effect, in great agreement with the ED for $N_s\in[4,10]$ and DMRG for larger lattice sizes.
\end{abstract}

\maketitle

The entanglement entropy (EE) has become the standard tool for characterizing quantum phase transitions \cite{Amico2008,Eisert2010,Laflorencie2016}. A central result in this context is the area law: for gapped local Hamiltonians, the ground-state entropy scales with the boundary area $\partial A$ of the subsystem \cite{Eisert2010,Hastings2007}. In 1D, this implies $S = \mathcal{O}(1)$ in a gapped phase, with a logarithmic correction $S \sim (c/3)\ln L$ at criticality \cite{CalabreseCardy2004,Vidal2003}. The Bose-Hubbard model \cite{Fisher1989,Jaksch1998,Lewbook}, experimentally realized with ultracold atoms in optical lattices \cite{Greiner2002}, is the paradigmatic system to study such transitions. Its EE across the Mott insulator-superfluid (MI-SF) transition has been studied via slave-boson mean-field theory \cite{Frerot2016}, quantum Monte Carlo simulations \cite{CasianoDiaz2023}, and measured experimentally \cite{Islam2015}. Entanglement has likewise proven a sharp diagnostic of order and competition in related lattice-boson settings with long-range or cavity-mediated interactions \cite{Sharma2022,LozanoMendez2022,CaballeroMekhov2015,RamirezBarajas2025}.

Virtually all of these studies employ `contiguous' bipartitions, in which subsystem $A$ is a connected block of sites. The area law, however, concerns the size of the boundary between subsystems, not its geometry. This motivates a natural question: What happens when the boundary is made extensive, as in the alternating (odd/even) bipartition, where every nearest-neighbor bond is cut? Studies on quantum spin chains suggest the EE should then scale with the number of broken bonds \cite{Keating2006}, replacing the area law with a volume law $S\propto N_s$, as demonstrated for free bosonic chains \cite{Rossignoli2011}, and therefore anomalous. However, no systematic study has addressed this across the MI-SF transition of the Bose-Hubbard model, where strong interactions and quantum correlations in the many-body quantum state can be dominant.

We compare four spatial bipartitions of a 1D ring: first half, second half, odd sites, and even sites. Using group-theory arguments and the structure of tensor products \cite{Zanardi2001,Zanardi2004}, we show that translational invariance alone does not force the half and alternating cuts to yield the same entanglement entropy. Analytic perturbation theory reveals that in the MI regime the half cut obeys the area law while the alternating cut follows a volume law, whereas in the SF regime both cuts converge through delocalization.

The methodological core of this work, and what allows us to promote these analytic limits into a statement about the full many-body problem, is a generalized slave-boson (SB) mean field supported by Exact diagonalization\cite{ZhangDong2010,Lew} (ED) and Density Matrix Renormalization Group\cite{DMRG1,DMRG2}  (DMRG) computational simulations using ITensor\cite{ITensor1,ITensor2,ITensor3}. The SB approach is built on a two-sublattice Gutzwiller state \cite{Sheshadri1993,vanOosten2001} dressed with Bogoliubov-de Gennes fluctuations, it produces a Gaussian ground state whose entanglement entropy is fixed by a reduced correlation matrix \cite{Peschel2003,PeschelEisler2009} that can be restricted to any subset of sites (contiguous or alternating) within a single formalism. The slave-boson approach applied to the SF-MI transition was introduced by Fr\'erot and Roscilde \cite{Frerot2016}; here we modify it to a two-sublattice geometry so that both cut types are accessible at once. Crucially, its cost is set by the local cutoff $n_\mathrm{max}$ and the number of momenta $k$ rather than by the Hilbert-space dimension $D$. Therefore, it is possible to reach lattice sizes far beyond  ED and shows that the partition-dependent scaling laws persist and extend cleanly into the many-body regime, i.e. that this is not a finite-size effect. This is confirmed by both ED small lattices and DMRG for larger lattices, the three methods are consistent when computationally feasible, certifying that the SB extrapolates the correct scaling laws and shows when the perturbation treatment needs further corrections.

\emph{Model and central question.}  The Bose-Hubbard model \cite{Fisher1989,Jaksch1998} for $N_s$ sites with $N$ particles, with periodic boundary conditions, is
\begin{equation}
\mathcal{H} = -t\sum_{\langle i,j\rangle}\left(\bdag{i}\bop{j}+\mathrm{H.c.}\right)
        +\frac{U}{2}\sum_{i=1}^{N_s}\nop{i}(\nop{i}-1),
\label{eq:BH}
\end{equation}
where $\langle i,j\rangle$ runs over the $N_s$ site bonds, i.e., the pairs $(i,i+1)$ with $i+N_s\equiv i$. This is a cycle graph $C_{N_s}$ (\cref{fig:grafo}). Given a partition of the sites $\Lambda=\{1,\dots,N_s\}$ into two subsets $A$ and $B=\Lambda\setminus A$, the bipartite entanglement entropy for the ground state $\ket{\psi_0}$ is
\begin{equation}
S_A=-\Tr\!\left[\rho^A\ln\rho^A\right],\qquad
\rho^A=\Tr_B\ket{\psi_0}\bra{\psi_0}.
\label{eq:SA}
\end{equation}

\begin{figure}[t!]
\centering
\includegraphics[width=0.3\linewidth]{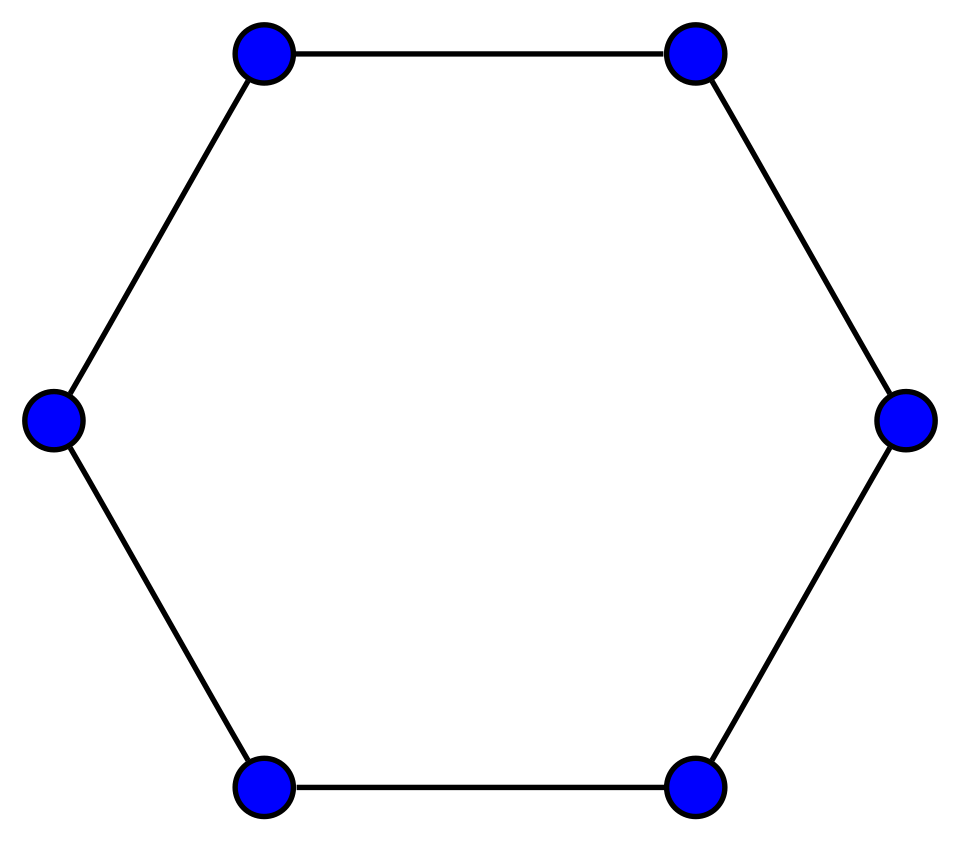}
\caption{Example of cycle graph $C_6$.}
\label{fig:grafo}
\end{figure}
The question to investigate is the following. The Hamiltonian \cref{eq:BH} is translationally invariant and invariant under reflections of the ring; should $S_A$ therefore be independent of which sites are grouped into $A$, as long as $|A|$ does not change? In particular, should the entropy coincide for the subsets
\[
A_{\mathrm{half}}=\{1,\dots,N_s/2\}
\quad\text{and}\quad
A_{\mathrm{odd}}=\{1,3,5,\dots,N_s-1\},
\]
both with $|A|=N_s/2$ sites? Counterintuitive, the answer we found is that $S_A$ is not only a function of the state, but also of the structure of the tensor product. The symmetry of the Hamiltonian applies to the state, but it only relates those bipartitions that the same symmetry permutes. The bipartition must be analyzed independently of the symmetry \cite{Zanardi2001,Zanardi2004}, since it has a correspondence to the tensor product structure.

Since we work in the canonical ensemble (fixed particle number), the physical subspace is $\Hi_N=\{\ket{\psi}\in\Fo:\hat N\ket{\psi}=N\ket{\psi}\}$, of dimension $D=\binom{N+N_s-1}{N}$. For any partition $\Lambda=A\sqcup B$ the Fock space factorizes as $\Fo=\Fo_A\otimes\Fo_B$, and $\Hi_N$ respects the fixed total number,
\begin{equation}
\Hi_N=\bigoplus_{n=0}^{N}\Fo_A^{(n)}\otimes\Fo_B^{(N-n)},
\label{eq:sumadirecta}
\end{equation}
so that $\rho^A$ is block-diagonal in the particle number in $A$, a consequence of the number superselection that governs entanglement between particle-conserving modes \cite{Wiseman2003}. Writing $\rho^A=\bigoplus_n p_n\,\rho^A_n$ with $\Tr\rho^A_n=1$ gives the exact decomposition
\begin{equation}
S_A=\underbrace{-\sum_n p_n\ln p_n}_{S_{\mathrm{num}}}
   +\underbrace{\sum_n p_n S(\rho^A_n)}_{S_{\mathrm{conf}}} ,
\label{eq:num_conf}
\end{equation}
where the number term $S_{\mathrm{num}}$ is set by the distribution of particles across the cut (links entanglement and particle-number fluctuations \cite{Klich2009,Song2010}) and dominates in both the MI and SF limits, while the configurational term $S_{\mathrm{conf}}$ is maximal in the transition region. The model \cref{eq:BH} at commensurate filling has two exact limits, MI at $t/U\to0$ and SF at $t/U\to\infty$, whose 1D transition occurs at $(t/U)_c=0.3050(10)$ \cite{Ejima2011,Kuhner2000}. We show that in the MI the geometry of the bipartition determines $S_A$ and how it scales, while in the SF the geometry is irrelevant and $S_{\mathrm{half}}$, $S_{\mathrm{odd}}$ coincide.

\emph{Symmetry arguments.} 
Two exact statements hold for the full range of $t/U$; their proofs are given in the SM \cref{app:proofs}.

First, complementarity: for any pure bipartite state, $\rho^A$ and $\rho^B$ share the same nonzero spectrum, so $S_A=S_B$ (Schmidt symmetry, \cref{th:schmidt}). This immediately yields
\[
S_{\{1,\dots,N_s/2\}}=S_{\{N_s/2+1,\dots,N_s\}},
\qquad
S_{\mathrm{odd}}=S_{\mathrm{even}} .
\]
It does not, however, necessarily determine whether $S_{\mathrm{half}}=S_{\mathrm{even}}$.

Second, invariance under lattice symmetries: if a site permutation $\pi$ satisfies $[V_\pi,\mathcal{H}]=0$ and $\ket{\psi_0}$ is an eigenvector of $V_\pi$, then $S_{\pi(A)}=S_A$ ( SM \cref{th:invariancia}). The eigenvector hypothesis holds only when $\pi$ is an automorphism of the cycle $C_{N_s}$.

The key observation is that no automorphism maps the half cut to the alternating one. The automorphism group of $C_{N_s}$ is the dihedral group $D_{2N_s}$, and every element preserves adjacency, hence maps a subgraph to an isomorphic one. But $A_{\mathrm{half}}=\{1,\dots,N_s/2\}$ has $N_s/2-1\ge 2$ internal edges, whereas $A_{\mathrm{odd}}=\{1,3,\dots,N_s-1\}$ has none (two odd sites are never neighbors for even $N_s$). A graph with edges cannot be isomorphic to one without, so for even $N_s$ no $\pi$ obeys $\pi(A_{\mathrm{half}})=A_{\mathrm{even/odd}}$. Translational invariance guarantees $S_A=S_{T(A)}$, but it does not permit rearranging which sites belong to $A$, since that changes the factorization $\Fo=\Fo_A\otimes\Fo_B$ with respect to which the entropy is defined \cite{Zanardi2001,Zanardi2004}. The intuition that both bipartitions must agree confuses a symmetry of the state with a freedom of the observer. This argument only rules out a guaranteed equality; whether the values coincide must be decided regime by regime.

\emph{Analytic limits: Mott insulator} ($t/U\ll 1$).  Split \cref{eq:BH} as $\mathcal{H}=\mathcal{H}_U+\mathcal{H}_t$. For $t=0$ and $N=N_s$ the ground state is the non-degenerate MI state $\ket{\psi_{MI}}=\ket{1,1,\dots,1}$ with $E_0=0$, and the gap to the doublon-hole degenerate states is $\Delta=U$. Perturbation theory at first-order gives
\begin{equation}
\ket{\psi_0}=\ket{\psi_{MI}}+\varepsilon\!\!\sum_{\langle i,j\rangle}\!\left(\ket{D_{ij}}+\ket{D_{ji}}\right)+O(\varepsilon^{2}),
\quad
\varepsilon\equiv\frac{\sqrt2\,t}{U},
\label{eq:psi1}
\end{equation}
where $\ket{D_{ij}}$ is the doublon-hole state ($n_i=2$, $n_j=0$, unit occupancy elsewhere) and the sum runs over the $2N_s$ oriented bonds. Only terms whose doublon-hole crosses the boundary $\bd$ (the set of bonds with $i\in A$ and $j\in B$ or viceversa) contribute to entanglement; the rest are reabsorbed into a product state \cite{Alba2012}.

Collecting the boundary terms into a coefficient matrix $C$ and diagonalizing $CC^{\mathsf T}$ (details in SM \cref{app:mott}), one finds a dominant eigenvalue and $r$ small ones $\lambda_k=\varepsilon^2 c_k[1+O(\varepsilon)]$ with $\sum_k c_k=\Tr(CC^{\mathsf T})=2|\bd|$. The resulting entropy is
\begin{equation}
S_A=\varepsilon^{2}\Big[\,2|\bd|\Big(1+\ln\tfrac{1}{\varepsilon^{2}}\Big)-\sum_{k}c_k\ln c_k\Big]+O\!\left(\varepsilon^{3}\right),
\label{eq:master}
\end{equation}
whose first term is manifestly the area law \cite{Eisert2010,Hastings2007}. Evaluating \cref{eq:master} for the three cuts:
\begin{align}
S_{\mathrm{half}}&=4\varepsilon^{2}\Big(1+\ln\tfrac1{\varepsilon^{2}}\Big),
\label{eq:Smitad}\\
S_{\mathrm{even/odd}}&=\varepsilon^{2}\!\left[2N_s\Big(1+\ln\tfrac1{\varepsilon^{2}}\Big)
-2\!\sum_{k=0}^{m-1}\!f\!\Big(\tfrac{\pi k}{m}\Big)\right],
\label{eq:Salt}\\
S_{\mathrm{site}}&=4\varepsilon^{2}\Big(1+\ln\tfrac1{\varepsilon^{2}}\Big)-4\ln 2\,\varepsilon^{2},
\label{eq:Ssitio}
\end{align}
with $m\equiv N_s/2$, $f(x)=4\cos^{2}\!x\,\ln(4\cos^{2}\!x)$, and $0\ln0\equiv0$. The half cut has $|\bd|=2$ and does not depend on $N_s$ (area law); the alternating cut has $|\bd|=N_s$ and scales $\propto N_s$ (volume law). The single-site cut, isomorphic to the half-cut boundary, matches $S_{\mathrm{half}}$ up to a constant.

\emph{Analitic limits: Superfluid} ($t/U\gg 1$). 
For $U=0$ the exact ground state is the single-mode condensate $\ket{\psi_0}=(N!)^{-1/2}(\bdag{0})^{N}\ket{0}$ with $\bdag{0}=N_s^{-1/2}\sum_j\bdag{j}$. Introducing sublattice modes $a_A^\dagger,a_B^\dagger$ and using the binomial theorem, the Schmidt decomposition is (see the SM \cref{app:sf})
\begin{equation}
\ket{\psi_0}=\sum_{n=0}^{N}\sqrt{\binom{N}{n}p^{n}q^{N-n}}\;\ket{n}_A\ket{N-n}_B,
\label{eq:schmidtSF}
\end{equation}
with $p=\tfrac{\ell}{N_s},\ q=1-p$, so that $S_A=H(\Bin(N,p))$ (see the EM and the SM) depends only on $|A|=\ell$, not on its geometry. Hence $S_{\mathrm{half}}=S_{\mathrm{even/odd}}=H(\Bin(N,\tfrac12))$: not as consequence of the lattice symmetry, but because of the delocalization of the state. For large $N$ at unit filling,
\begin{equation}
S_A\simeq\tfrac12\ln\!\Big(\tfrac{\pi e}{2}N_s\Big),
\label{eq:logley}
\end{equation}
the logarithmic law governing the SF phase \cite{Klich2009,Song2010,Giamarchi2003}.

\emph{Scaling laws.}  \Cref{tab:leyes} summarizes the two limits. In the MI, the cut geometry matters greatly: the even-odd cut has an extensive boundary, while the half-half cut always has two boundary points regardless of $N_s$ \cite{Hastings2007,Eisert2010}, the same phenomenon reported for spin chains with comb-type bipartitions \cite{Keating2006}. In the SF, the geometry becomes irrelevant, since the state occupies the lattice macroscopically and entanglement is fixed by the binomial distribution of the particles.

\begin{table}[h!]
\centering
\caption{Behavior of $S_A$ with $N_s$ (analytical limits).}
\label{tab:leyes}
\begin{ruledtabular}
\begin{tabular}{lll}
 & MI ($t/U\ll 1$) & SF ($t/U\gg 1$)\\
\colrule
$|\bd|$ half-half   & $2$          & $2$\\
$|\bd|$ even-odd  & $N_s$        & $N_s$\\
scaling $S_{\mathrm{half}}$ & const. & $\tfrac12\ln\!\Big(\tfrac{\pi e}{2}N_s\Big)$\\
scaling $S_{\mathrm{even/odd}}$   & $\propto N_s$ & $\tfrac12\ln\!\Big(\tfrac{\pi e}{2}N_s\Big)$ \\
\end{tabular}
\end{ruledtabular}
\end{table}

\emph{Slave-boson treatment of an arbitrary bipartition.}  The analytic limits in the SF an MI are exact but live at the edges of the phase diagram and, in the MI, only to dominant order in $\varepsilon$. To decide if these scaling laws are genuine many-body features rather than small-size or leading-order accidents, we need a method that treats the contiguous and alternating cuts on exactly the same footing, and reaches lattice sizes beyond what exact diagonalization allows. The slave-boson (SB) approach fullfils both requirements. \cite{Sheshadri1993,Frerot2016,Sharma2022}. Here we present the construction; the explicit assembly of the quadratic Hamiltonian and the dominant order analysis are shown in the SM \cref{app:sb}.

\begin{figure*}[t]
\centering
\includegraphics[width=0.92\textwidth]{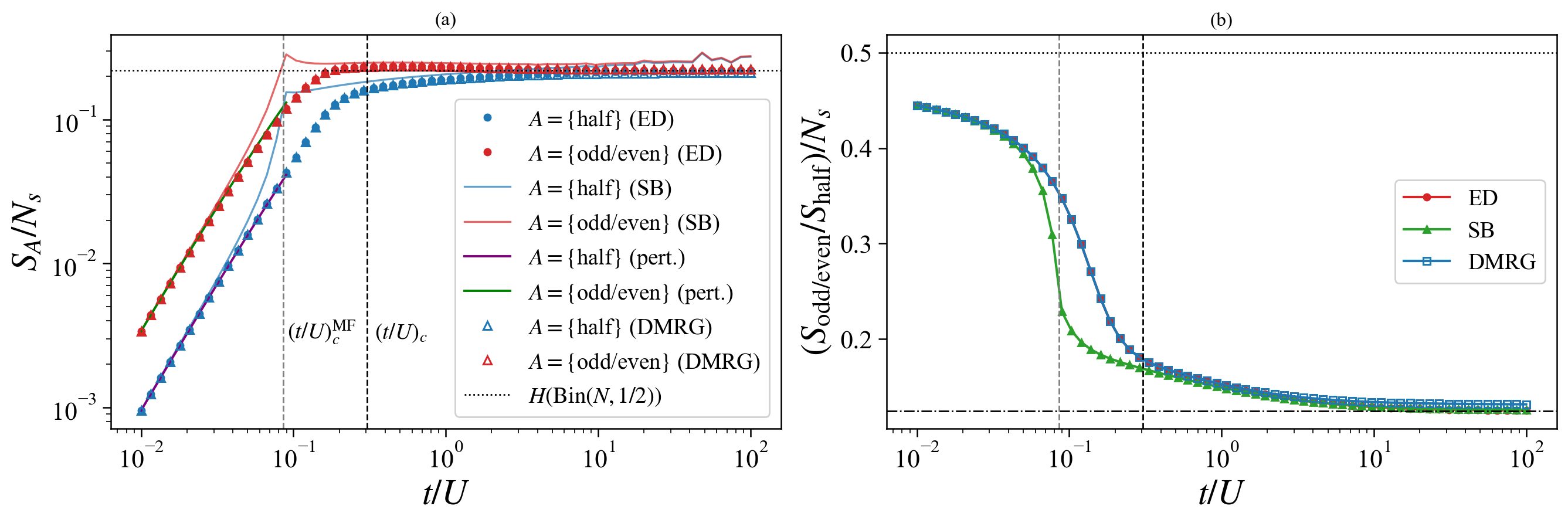}
\caption{Full range of $t/U$, $N_s=10$. Left: $S_A$ from ED, SB, and perturbation theory \cref{eq:master}, with the binomial saturation of the SF. Right: ratio $S_{\mathrm{even/odd}}/S_{\mathrm{half}}$: it starts near $N_s/2$ in the MI, decays across the critical region, and tends to $1$ in the SF. The critical points $(t/U)_c=0.305$ and $(t/U)_c^{\mathrm{MF}}\approx0.086$ are marked.}
\label{fig:cmp_completo}
\end{figure*}

\begin{figure*}[t]
\centering
\includegraphics[width=0.9\textwidth]{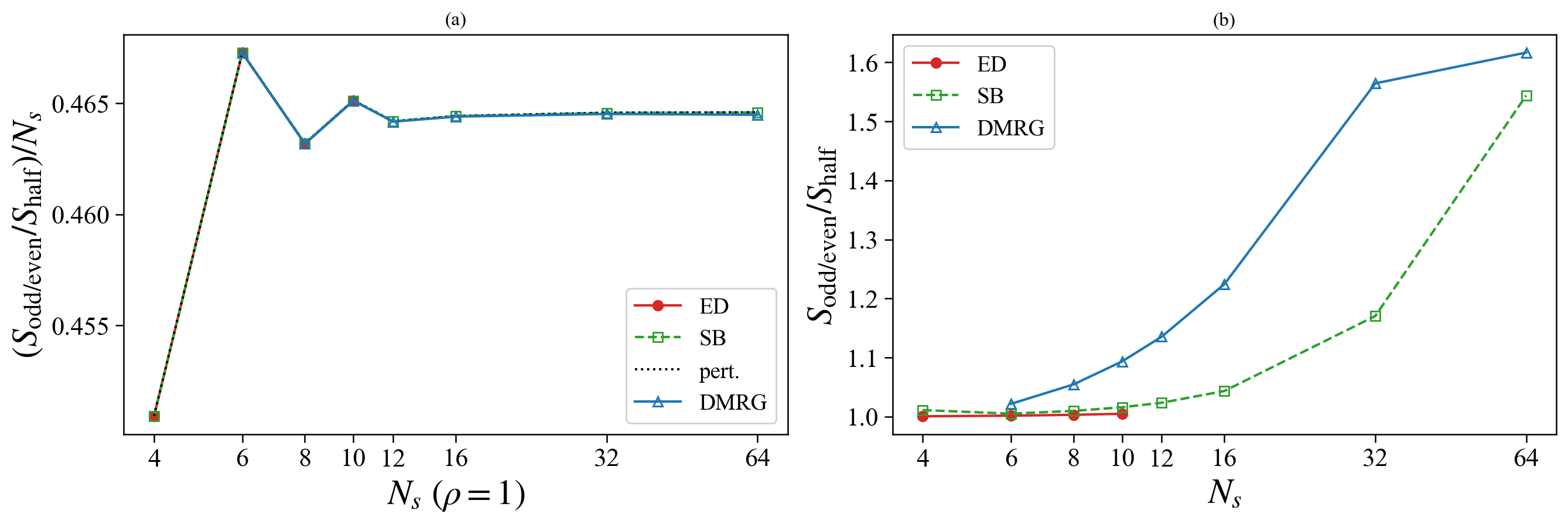}
\caption{Scaling with $N_s$ at unit filling, comparing ED and SB. Left, $t/U=10^{-2}$ (MI). Right, $t/U=10^{2}$ (SF).}
\label{fig:cmp_esc}
\end{figure*}

\emph{Mean-field state and Gaussian fluctuations.}  The starting point is the two-sublattice (odd/even) mean-field state
\begin{equation}
\ket{\Phi_0}=\bigotimes_{i\in E}\ket{\alpha_e}\otimes\bigotimes_{i\in O}\ket{\alpha_o},
\qquad
\ket{\alpha_{e,o}}=\sum_{n=0}^{n_{\max}}\alpha^{(e,o)}_n\ket{n},
\label{eq:sb-ansatz-main}
\end{equation}
whose local amplitudes minimize the energy per site at fixed mean density $\rho=N/N_s$, and normalized for each sublattice. Fluctuations around \cref{eq:sb-ansatz-main} are treated within BdG theory: on each sublattice one builds a local orthonormal basis whose first vector is the optimized state $\ket{\alpha}$, and the remaining $n_{\max}$ excited states play the role of fluctuation bosons $\hat\gamma_{r,\alpha}$. Folding the Brillouin zone by the staggered structure ($k$ coupled to $k+\pi$, with $f_k=2\cos k$ in 1D) casts the quadratic Hamiltonian into the standard BdG block form
\begin{equation}
\mathcal{H}^{(2)}=\tfrac14\sum_{k}\hat\Gamma_k^\dagger\,
\mathcal H_{\mathrm{BdG}}(k)\,\hat\Gamma_k,
\qquad
\mathcal H_{\mathrm{BdG}}(k)=\begin{pmatrix}A_k&B_k\\ B_k^{*}&A_k^{*}\end{pmatrix},
\label{eq:sb-Hbdg-main}
\end{equation}
diagonalized by a paraunitary (Bogoliubov) transformation $T_k$ with the bosonic metric $\sigma=\mathrm{diag}(\mathbb I,-\mathbb I)$. The same transformation delivers the ground-state correlation matrix $\mathcal C(k)$,  see the End Matter and the SM (\cref{app:sb-bdg}) .

\begin{table}[h!]
\centering
\caption{Ratio $S_{\mathrm{even/odd}}/S_{\mathrm{half}}$ in the deep MI limit ($t/U=10^{-3}$, $\rho=1$).}
\label{tab:cmp_ratio}
\begin{ruledtabular}
\begin{tabular}{rccccc}
$N_s$ & ED & SB & DMRG & PT & $N_s/2$\\
\colrule
$4$  & $1.8036$ & $1.8036$  & --        & $1.8037$  & $2$\\
$6$  & $2.8036$ & $2.8036$  & $2.8036$  & $2.8037$  & $3$\\
$8$  & $3.7054$ & $3.7054$  & $3.7054$  & $3.7055$  & $4$\\
$10$ & $4.6512$ & $4.6512$  & $4.6511$  & $4.6513$  & $5$\\
$12$ & --       & $5.5702$  & $5.5700$  & $5.5703$  & $6$\\
$16$ & --       & $7.4307$  & $7.4303$  & $7.4308$  & $8$\\
$32$ & --       & $14.8661$ & $14.8645$ & $14.8664$ & $16$\\
$64$ & --       & $29.7334$ & $29.7267$ & $29.7339$ & $32$\\
\end{tabular}
\end{ruledtabular}
\end{table}

\emph{Computational Simulations.} We now compare the SB against exact diagonalization (ED). In ED, $\mathcal{H}$ is built in sparse form \cite{ZhangDong2010,Lew}, the ground state is obtained by Lanczos, and $\rho^A$ is diagonalized as the squared singular values of the amplitude matrix $M_{ab}$ (see SM \cref{eq:rho-a-schmidt}); it is numerically exact but restricted to $N_s\le10$ by the growth of $D$. The SB, through \cref{eq:S_gauss_main}, supplies the same entropies at a cost independent of $D$ and for both cuts on equal footing, so the two methods overlap on $N_s\le10$ and the SB continues alone beyond it. Moreover, the results are confirmed using DMRG, considering periodic boundary conditions, while being sub-optimal due to entanglement growth as it is well know\cite{DMRG2}. The periodic boundary conditions make simulations with DMRG difficult, but we can achieve larger lattice sizes to analyze the scaling with respect to ED. We perform our simulations with a combination of CPU/GPU implementation of the algorithm with ITensor\cite{ITensor1,ITensor2,ITensor3}..

\Cref{fig:cmp_completo} is the central result over the full range of $t/U$. Where ED is available the SB tracks it curve for curve, and the perturbative master formula \cref{eq:master} matches both deep in the MI; the clearest signature of the geometric effect is the ratio $S_{\mathrm{even/odd}}/S_{\mathrm{half}}$, which starts near $N_s/2$ in the deep MI (the volume-law versus area-law contrast), decays across the critical region, and tends to $1$ in the SF. In the MI phase all of $S_A$ comes from doublon-hole fluctuations, precisely those generated by the perturbation theory of the analytic treatment shown and known to control the entanglement structure of the insulating phase \cite{Lauchli2008,Alba2012}, and the SB reproduces the ratio to four significant figures (\cref{tab:cmp_ratio}). The convergence in the SF is not a lattice symmetry but the delocalization of the state erasing the cut geometry; both methods saturate near $H(\Bin(N_s,\tfrac12))$.'

\Cref{fig:cmp_esc} isolates what only the SB can show: the scaling with $N_s$ at fixed $t/U$, followed well past the ED ceiling. Deep in the MI the contiguous cut stays flat (area law) while the alternating cut grows linearly (volume law), and the two collapse onto the common binomial value in the SF. The scaling laws it certifies deep in each phase are exactly those confirmed by ED.

In this Letter, we have observed the effect of the choice of bipartition on the entanglement entropy and how strong interactions can modify significantly the overall picture leading to a volumetric law. The Schmidt symmetry theorem ensures that tracing over $A$ or $B$ is equivalent, and it was verified numerically that $S_{\mathrm{half}_1}=S_{\mathrm{half}_2}$ and $S_{\mathrm{even}}=S_{\mathrm{odd}}$. However, translational invariance does not require $S_{\mathrm{half}}=S_{\mathrm{even}}$: no automorphism of the ring maps $A_{\mathrm{half}}\to A_{\mathrm{even/odd}}$, since the subgraphs are not isomorphic. The extent to which one bipartition differs from the other was computed analytically in the perturbative MI regime, and we showed why they converge again in the SF, not by invariance but by delocalization. The single-site cut, isomorphic to the half-half boundary, provides an additional consistency check. The tool used to test these statements is the slave-boson mean field with quantum fluctuations. Its Gaussian structure lets a single reduced correlation matrix deliver the entropy of the contiguous and the alternating cut, and its cost lets it reach large values of $N_s$, inaccesible for ED due to Hilbert space growth. Our results are consistent with DMRG simulations as the system size increases providing an independent verification of this asymmetry.  We find  that in general $S_{\mathrm{even/odd}}/S_{\mathrm{half}}\to N_s/2$ for MI, and the collapse of both cuts through delocalization in the SF. Its leading-order reduction coincides mode for mode with the perturbative spectrum,  ED and DMRG simulations. We expect the same approach to be a convenient route to partition-dependent entanglement in other quantum matter settings\cite{caballero2026quantum}, as well as other systems where the interplay of strong quantum correlations and interaction driven effects might arise with unusual behavior~\cite{Kitaev2006,Levin2006,Savary2016,Giamarchi2008,Li2008,Schuch2011,Schlawin2019}.
\\

This work is partially supported by the grants UNAM-DGAPA-PAPIIT: IN118823, UNAM-DGPA-PAPIIT: IG101826, UNAM-CIC: Apoyo a Laboratorios Nacionales 2025, CONAHCYT/SECIHITI: LNC-2023-51,  as well as by grants from NVIDIA and utilized NVIDIA RTX 6000 Ada.

% Bibliography --------------------------------
%apsrev4-2.bst 2019-01-14 (MD) hand-edited version of apsrev4-1.bst
%Control: key (0)
%Control: author (8) initials jnrlst
%Control: editor formatted (1) identically to author
%Control: production of article title (0) allowed
%Control: page (0) single
%Control: year (1) truncated
%Control: production of eprint (0) enabled
%

%\bibliographystyle{apsrev4-2}
%\bibliography{bib_ea}% Produces the bibliography via BibTeX.

\clearpage
\section*{End Matter}
\emph{Entropy of any subset from one correlation matrix.}
The essential point is that the fluctuation ground state is Gaussian, hence fully determined by its correlations. Transforming $\mathcal C(k)$ to real space through the unfolding relation
\begin{equation}
\hat\gamma_x=\frac{1}{\sqrt N}\sum_k e^{ikx}\big(\hat\gamma_k+(-1)^x\hat\gamma_{k+\pi}\big),
\label{eq:sb-unfold-main}
\end{equation}
and simply keeping the rows and columns labelled by the sites $x\in A$, one obtains the reduced correlation matrix $\mathcal C_A$. The eigenvalues $\{n_m\}$ of $\sigma_A\mathcal C_A$ are the entanglement-mode occupancies, and the entropy is \cite{Peschel2003,PeschelEisler2009,Casini2009}
\begin{equation}
S_A=\sum_m\big[(1+n_m)\ln(1+n_m)-n_m\ln n_m\big].
\label{eq:S_gauss_main}
\end{equation}
Because \cref{eq:sb-unfold-main,eq:S_gauss_main} require nothing but the list of sites in $A$, the half-half and even/odd cuts are computed within the same formalism, at the same finite $N_s$; the geometry of the cut enters only through which rows of $\mathcal C_A$ are retained. This allows to reach large $N_s$ and is what makes the SB the natural tool for the central question of this paper.

\emph{Consistency with perturbation theory.} 
Each SB calculation is performed on a finite periodic lattice of $N_s$ sites at fixed unit density with a discrete Brillouin zone $k\in\{2\pi m/N_s\}$; it is therefore not a thermodynamic-limit calculation, and the entropy is evaluated for that specific $N_s$. Its cost, however, is controlled by the local cutoff $n_{\max}$ and the number of momenta rather than by $D=\binom{2N_s-1}{N_s}$, so it accesses periodic lattice sizes far beyond the $N_s\le10$ ceiling of ED. Evaluating the partition-dependent scaling at large $N_s$ then turns the analytic limits for $t/U\gg1$ and $t/U\ll 1$ into a statement about the many-body regime.

That the SB and the perturbation theory describe the same physics in the MI is made explicit by expanding \cref{eq:S_gauss_main} at small occupancies $n_m\ll1$, where it reduces to
\begin{equation}
S_A\simeq-\sum_m n_m\ln n_m+\sum_m n_m,
\label{eq:sb-leading-main}
\end{equation}
which has exactly the form of the master formula \cref{eq:master} under the identification $n_m\equiv\lambda_m=\varepsilon^2 c_m$. The entanglement-mode occupancies of the SB are, mode for mode, the eigenvalues $\{c_k\}$ of the boundary matrix $CC^{\mathsf T}$ computed analytically in \cref{eq:Smitad,eq:Salt,eq:Ssitio}, so the two constructions reconstruct the same spectrum (\cref{app:sb-mi}). An important clarification about the SB, being a mean-field, is that it places the MI-SF critical point at $(t/U)_c^{\mathrm{MF}}=\tfrac12(3-2\sqrt2)\approx0.086$ \cite{Sheshadri1993,vanOosten2001}, while the exact 1D value is $(t/U)_c=0.305$ \cite{Ejima2011,Kuhner2000}, so on the critical region the SB is only qualitative; for the rest of the regime, it is quantitative.

\emph{Entropy of entanglement in the perfect SF ($U=0$)}. For details on the derivation, see the SM. 
\begin{eqnarray}
S_A&=&-\sum_{n=0}^{N}\binom{N}{n}p^{n}q^{N-n}
\ln\!\Big[\binom{N}{n}p^{n}q^{N-n}\Big]
\\
&=&H\!\big(\Bin(N,p)\big).
\end{eqnarray}

\clearpage
\appendix

\widetext
\begin{center}
\textbf{\large Supplemental material for: Odd/Even or Half?  Entanglement Anomaly in the Bose-Hubbard model}
\\
$\phantom{a}$
\\
{Humberto E. Hern\'andez-L\'opez$^{1}$ and Santiago F. Caballero-Benitez$^{1}$}
 %\email{scaballero@fisica.unam.mx}
 \\
{%
$^{1}$Instituto de Física, LSCSC-LANMAC, Universidad Nacional Autónoma de México, Ciudad de México 04510, Mexico
}%
\end{center}
%\end{\widetext}
%%%%%%%%%% Merge with supplemental materials %%%%%%%%%%
%%%%%%%%%% Prefix a "S" to all equations, figures, tables and reset the counter %%%%%%%%%%
\setcounter{equation}{0}
\setcounter{figure}{0}
\setcounter{table}{0}
\setcounter{page}{1}
\makeatletter
\renewcommand{\theequation}{S\arabic{equation}}
\renewcommand{\thefigure}{S\arabic{figure}}

\section{Proofs of the exact results}
\label{app:proofs}

\begin{theorem}[Schmidt symmetry]
Let $\ket{\psi}\in\Fo_A\otimes\Fo_B$ be a pure, normalized state. Then $\rho^A=\Tr_B\ket{\psi}\bra{\psi}$ and $\rho^B=\Tr_A\ket{\psi}\bra{\psi}$ have the same spectrum of non-zero eigenvalues, with the same multiplicities. Consequently $S_A=S_B$. This is the standard Schmidt-decomposition symmetry \cite{NielsenChuang}; we include the proof to fix notation used below.
\label{th:schmidt}
\end{theorem}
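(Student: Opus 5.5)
We prove the statement through the singular value decomposition of the amplitude matrix of $\ket{\psi}$. This also fixes the notation $M_{ab}$ that is later used for exact diagonalization. Choose orthonormal occupation-number bases $\{\ket{a}_A\}$ of $\Fo_A$ and $\{\ket{b}_B\}$ of $\Fo_B$. In the canonical setting these may be restricted to the finitely many configurations allowed by \cref{eq:sumadirecta}, so every space involved is finite dimensional. Write $\ket{\psi}=\sum_{a,b}M_{ab}\ket{a}_A\ket{b}_B$. Normalization then reads $\Tr(MM^\dagger)=\sum_{ab}|M_{ab}|^2=1$.

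The first step is a direct computation of the two partial traces in these bases. It gives $\rho^A=\sum_{a,a'}(MM^\dagger)_{aa'}\ket{a}\bra{a'}$, so $\rho^A$ is represented by $MM^\dagger$. Likewise $\rho^B$ is represented by $(M^\dagger M)^{\mathsf T}=M^{\mathsf T}\overline{M}$. Transposition does not change the spectrum, so the claim reduces to a linear-algebra fact: $MM^\dagger$ and $M^\dagger M$ have the same nonzero eigenvalues, with the same multiplicities.

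The second step proves that fact with the singular value decomposition $M=U\Sigma V^\dagger$, where $U$ and $V$ are unitary and $\Sigma$ is rectangular diagonal with entries $s_k>0$ for $k\le r$. Then $MM^\dagger=U\Sigma\Sigma^{\mathsf T}U^\dagger$ and $M^\dagger M=V\Sigma^{\mathsf T}\Sigma V^\dagger$. Both matrices have the nonzero eigenvalues $s_k^2$, $k=1,\dots,r$, each counted with the same multiplicity, and they differ only in the number of zeros. Equivalently, the vectors $\ket{u_k}=\sum_a U_{ak}\ket{a}$ and $\ket{v_k}=\sum_b\overline{V_{bk}}\ket{b}$ give the Schmidt form $\ket{\psi}=\sum_k s_k\ket{u_k}\ket{v_k}$. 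From this form $\rho^A=\sum_k s_k^2\ket{u_k}\bra{u_k}$ and $\rho^B=\sum_k s_k^2\ket{v_k}\bra{v_k}$ can be read off directly. The $s_k^2$ sum to $1$ by normalization.

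The third step is the entropy statement. Because $0\ln 0\equiv 0$, zero eigenvalues do not contribute, so $S_A=-\sum_k s_k^2\ln s_k^2=S_B$.

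The only point needing care is the bookkeeping of $\rho^B$ (the transpose or complex conjugate) and the justification of finite dimensionality. In the grand-canonical Fock space one can instead use the compact-operator version of the SVD, since $M$ is Hilbert–Schmidt. I would treat the fixed-$N$ case as primary, because it is what the paper uses.
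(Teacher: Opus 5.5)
Your proposal is correct and follows the paper's proof essentially step for step: the amplitude matrix $M$, $\rho^A=MM^\dagger$ and $\rho^B=(M^\dagger M)^{\mathsf T}$, then the SVD showing both have nonzero spectrum $\{s_k^2\}$, and $0\ln 0\equiv 0$ giving $S_A=S_B$. Your added remarks on finite dimensionality in the fixed-$N$ sector, the explicit Schmidt vectors, and the Hilbert--Schmidt extension are correct refinements that the paper leaves implicit.
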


\begin{proof}
Considering orthonormal bases $\{\ket{\alpha}_A\}$ and $\{\ket{\beta}_B\}$, write $\ket{\psi}=\sum_{\alpha\beta}M_{\alpha\beta}\ket{\alpha}_A\ket{\beta}_B$ with $\sum_{\alpha\beta}|M_{\alpha\beta}|^2=1$. Computing the partial traces element by element,
\begin{equation}
\rho^A_{\alpha\alpha'}=\sum_\beta M_{\alpha\beta}M_{\alpha'\beta}^*,
\label{eq:rho-a-schmidt}
\end{equation}
so $\rho^A=MM^{\dagger}$ and $\rho^B=(M^{\dagger}M)^{\mathsf T}$. With the SVD $M=W\Sigma V^{\dagger}$, $MM^{\dagger}=W\Sigma\Sigma^{\mathsf T}W^{\dagger}$ and $M^{\dagger}M=V\Sigma^{\mathsf T}\Sigma V^{\dagger}$; $\Sigma\Sigma^{\mathsf T}$ and $\Sigma^{\mathsf T}\Sigma$ differ only in trailing zeros, so their nonzero entries are $\{\sigma_k^2\}$. The transposition does not affect the spectrum, and since $S(\rho)$ depends only on the spectrum with $0\ln0\equiv0$, we conclude $S_A=S_B$.
\end{proof}

\begin{theorem}[Invariance of $S_A$ under permutation]
Let $\pi$ be a permutation of sites with $[V_\pi,\mathcal{H}]=0$, and let $\ket{\psi}$ be an eigenvector of $V_\pi$, $V_\pi\ket{\psi}=e^{i\theta}\ket{\psi}$. Then $S_{\pi(A)}=S_A$ for every $A\subseteq\Lambda$.
\label{th:invariancia}
\end{theorem}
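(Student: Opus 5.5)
The plan is to show that the site-permutation operator $V_\pi$ is unitary and factorizes as a tensor product with respect to the bipartitions $A\sqcup B$ and $\pi(A)\sqcup\pi(B)$. Then $\rho^{\pi(A)}$ of $V_\pi\ket{\psi}$ will be unitarily equivalent to $\rho^A$ of $\ket{\psi}$. Finally, the eigenvector hypothesis lets us replace $V_\pi\ket{\psi}$ by $\ket{\psi}$ up to a global phase, which drops out of $\ket{\psi}\bra{\psi}$.

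First I will define $V_\pi$ on Fock states by
\[
V_\pi\ket{n_1,\dots,n_{N_s}}=\ket{n_{\pi^{-1}(1)},\dots,n_{\pi^{-1}(N_s)}},
\]
equivalently $V_\pi\bdag{i}V_\pi^\dagger=\bdag{\pi(i)}$ with $V_\pi\ket{0}=\ket{0}$. It maps the occupation-number basis bijectively onto itself, so it is unitary. Bosonic operators carry no ordering signs, so there are no sign subtleties here. The key structural step is the factorization. Let $\Fo_A$ be spanned by the occupations of the sites in $A$. Then $V_\pi$ restricts to a unitary $U_A:\Fo_A\to\Fo_{\pi(A)}$, which relabels the sites of $A$ as the sites of $\pi(A)$, and similarly to $U_B:\Fo_B\to\Fo_{\pi(B)}$. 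On product basis states one has $V_\pi(\ket{\mathbf n_A}\otimes\ket{\mathbf n_B})=U_A\ket{\mathbf n_A}\otimes U_B\ket{\mathbf n_B}$, so $V_\pi=U_A\otimes U_B$ as a map $\Fo_A\otimes\Fo_B\to\Fo_{\pi(A)}\otimes\Fo_{\pi(B)}$.

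Next I will compute the partial trace. Write $\ket{\psi}=\sum M_{\alpha\beta}\ket{\alpha}_A\ket{\beta}_B$ as in the proof of \cref{th:schmidt}. Then
\[
V_\pi\ket{\psi}=\sum M_{\alpha\beta}\,U_A\ket{\alpha}\otimes U_B\ket{\beta}.
\]
The vectors $\{U_A\ket{\alpha}\}$ and $\{U_B\ket{\beta}\}$ are orthonormal bases of $\Fo_{\pi(A)}$ and $\Fo_{\pi(B)}$. By \cref{eq:rho-a-schmidt}, the reduced state of $V_\pi\ket{\psi}$ on $\pi(A)$ is therefore $U_A MM^\dagger U_A^\dagger=U_A\rho^AU_A^\dagger$. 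This matrix has the same spectrum as $\rho^A$, so it gives the same von Neumann entropy.

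Finally I will use the hypothesis $V_\pi\ket{\psi}=e^{i\theta}\ket{\psi}$. It gives $V_\pi\ket{\psi}\bra{\psi}V_\pi^\dagger=\ket{\psi}\bra{\psi}$, so the reduced state of $\ket{\psi}$ itself on $\pi(A)$ equals $U_A\rho^AU_A^\dagger$, and hence $S_{\pi(A)}=S_A$. The commutation $[V_\pi,\mathcal H]=0$ is not needed for the implication. It only explains why a nondegenerate ground state is automatically an eigenvector of $V_\pi$. The main point requiring care is the tensor factorization of $V_\pi$ across different bipartitions, namely that it maps the $A|B$ split to the $\pi(A)|\pi(B)$ split. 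I will verify it explicitly on the product basis. I will also remark that at fixed $N$ the block structure of \cref{eq:sumadirecta} is preserved, since $U_A$ conserves the particle number in $A$.
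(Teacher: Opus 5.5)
Your proposal is correct and follows the paper's proof essentially step for step: you factorize $V_\pi=V_\pi^A\otimes V_\pi^B$ across the $A|B$ and $\pi(A)|\pi(B)$ splits, conclude $\rho^{\pi(A)}[V_\pi\psi]=V_\pi^A\rho^A[\psi]V_\pi^{A\dagger}$, and use the unitary invariance of the entropy. You also make explicit a step the paper leaves implicit, namely that $V_\pi\ket{\psi}=e^{i\theta}\ket{\psi}$ leaves $\ket{\psi}\bra{\psi}$ invariant, so $\rho^{\pi(A)}[V_\pi\psi]=\rho^{\pi(A)}[\psi]$; and you correctly observe that the hypothesis $[V_\pi,\mathcal{H}]=0$ is never used.
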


\begin{proof}
Since $\pi$ is a bijection on sites, $V_\pi$ maps $\Fo_A\otimes\Fo_B$ to $\Fo_{\pi(A)}\otimes\Fo_{\pi(B)}$ and factorizes as $V_\pi=V_\pi^{A}\otimes V_\pi^{B}$ with $V_\pi^{A}:\Fo_A\to\Fo_{\pi(A)}$ unitary. Writing $\ket{\psi'}=V_\pi\ket{\psi}$,
\begin{align}
\rho^{\pi(A)}[\psi']&=\Tr_{\pi(B)}\!\left[(V^A_\pi\otimes V^B_\pi)\ket{\psi}\bra{\psi}(V^A_\pi\otimes V^B_\pi)^{\dagger}\right]\nonumber\\
&=V^A_\pi\,\rho^{A}[\psi]\,V^{A\dagger}_\pi.
\end{align}
Hence $\rho^{\pi(A)}[\psi']$ and $\rho^{A}[\psi]$ are unitarily equivalent, and $S_{\pi(A)}[\psi]=S_A[\psi]$.
\end{proof}

\section{Details of the MI perturbation theory}
\label{app:mott}

This appendix expands the derivation summarized in the main text. We work at unit filling ($N=N_s$) and split the Hamiltonian \cref{eq:BH} as
$\mathcal{H}=\mathcal{H}_U+\mathcal{H}_t$, with
\[
\mathcal{H}_U=\frac{U}{2}\sum_i\nop{i}(\nop{i}-1),
\qquad
\mathcal{H}_t=-t\sum_{\langle i,j\rangle}\big(\bdag{i}\bop{j}+\mathrm{H.c.}\big).
\]

\subsection{Ground state to first order}

For $t=0$ the ground state of $\mathcal{H}_U$ is the Mott state $\ket{\psi_{MI}}=\ket{1,1,\dots,1}$ with $E_0=0$. It is non-degenerate: any other configuration with $\sum_i n_i=N_s$ must place two particles on some site, giving energy $\ge U>0$. The first excited manifold consists of the doublon-hole states $\ket{D_{ij}}$, defined by $n_i=2$, $n_j=0$, and unit occupancy on every other site; each has energy $\tfrac{U}{2}\,[2(2-1)]=U$, so the gap is $\Delta=U$.

Acting with $\mathcal{H}_t$ on $\ket{\psi_{MI}}$, a single hop along the oriented bond $j\to i$ produces
\begin{equation}
\bdag{i}\bop{j}\ket{\psi_{MI}}
=\sqrt{n_j}\,\sqrt{n_i+1}\,\ket{D_{ij}}
=\sqrt{2}\,\ket{D_{ij}},
\end{equation}
since $n_j=1$ and $n_i+1=2$ on the Mott background. Hence the only nonzero matrix element is $\bra{D_{ij}}\mathcal{H}_t\ket{\psi_{MI}}=-\sqrt2\,t$. Standard first-order perturbation theory, $\ket{\psi_0}=\ket{\psi_{MI}}+\sum_m \frac{\bra{m}\mathcal{H}_t\ket{\psi_{MI}}}{E_0-E_m}\ket{m}$, with $E_0-E_m=-U$ for every doublon-hole state, then gives
\begin{equation}
\ket{\psi_0}=\ket{\psi_{MI}}
+\varepsilon\!\!\sum_{\langle i,j\rangle}\!\big(\ket{D_{ij}}+\ket{D_{ji}}\big)
+O(\varepsilon^{2}),
\qquad
\varepsilon\equiv\frac{\sqrt2\,t}{U},
\label{eq:psi1app}
\end{equation}
where both orientations $D_{ij}$ and $D_{ji}$ of each bond appear, for a total of $2N_s$ terms.

\subsection{Only boundary terms entangle}

Fix a bipartition $A|B$ and recall that the boundary $\bd$ is the set of bonds $\langle i,j\rangle$ with one endpoint in $A$ and the other in $B$. Writing each term of \cref{eq:psi1app} as a product $\ket{a}_A\ket{b}_B$, the doublon-hole excitation of a given bond falls into exactly one of three classes:
\begin{enumerate}
\item[(i)] both endpoints in $A$: the term is
$\ket{\chi_A}\ket{\psi_{MI,B}}$, with the $B$ factor unchanged;
\item[(ii)] both endpoints in $B$: the term is
$\ket{\psi_{MI,A}}\ket{\chi_B}$, with the $A$ factor unchanged;
\item[(iii)] the bond lies in $\bd$: the term is $\ket{a_\mu}\ket{b_\mu}$
with $\ket{a_\mu}\perp\ket{\psi_{MI,A}}$ and $\ket{b_\mu}\perp\ket{\psi_{MI,B}}$.
\end{enumerate}
Collecting the interior contributions (i) and (ii) with the leading term,
\begin{align}
&\ket{\psi_{MI,A}}\ket{\psi_{MI,B}}
+\varepsilon\ket{\chi_A}\ket{\psi_{MI,B}}
+\varepsilon\ket{\psi_{MI,A}}\ket{\chi_B}\nonumber\\
&\qquad=\ket{\psi_{MI,A}'}\ket{\psi_{MI,B}'}+O(\varepsilon^{2}),
\\
&\ket{\psi_{MI,X}'}\equiv\ket{\psi_{MI,X}}+\varepsilon\ket{\chi_X},
\end{align}
which is a product state and therefore carries zero entanglement entropy. Only the boundary terms (iii) survive as genuine entanglement to this order \cite{Alba2012}.

\subsection{The boundary matrix $C$}

Order the relevant Fock states of $A$ as the rows and those of $B$ as the columns, and define the boundary matrix
\begin{equation}
C_{ab}=
\begin{cases}
1, & (a,b)\ \text{comes from an oriented boundary bond},\\
0, & \text{otherwise.}
\end{cases}
\end{equation}
Together with the (rank-one) leading term, the full amplitude matrix of the state is $M=M_0+\varepsilon C+O(\varepsilon^2)$, with $M_0=e_{\psi_{MI,A}}e_{\psi_{MI,B}}^{\mathsf T}$.

The reduced density matrix $\rho^A=MM^{\dagger}$ has eigenvalues equal to the squared singular values of $M$. At zeroth order $M_0$ has a single unit singular value along $e_{\Omega_A},e_{\Omega_B}$. Projecting off that direction with $P_A=\mathbb I-e_{\Omega_A}e_{\Omega_A}^{\mathsf T}$ (and $P_B$ analogously), the singular-value perturbation theorem \cite{HornJohnson} states that the remaining (small) singular values of $M$ coincide, to leading order, with those of
\[
P_A M P_B=\varepsilon\,P_A C P_B+O(\varepsilon^{2})=\varepsilon\,C+O(\varepsilon^2),
\]
the last equality holding because, by construction of class (iii), every row and column of $C$ is already orthogonal to $e_{\Omega_A}$ and $e_{\Omega_B}$. Squaring, the small eigenvalues of $\rho^A$ are
\begin{equation}
\lambda_k=\varepsilon^{2}c_k\,[1+O(\varepsilon)],\qquad k=1,\dots,r,
\label{eq:eigen-smallapp}
\end{equation}
where $c_1,\dots,c_r$ are the nonzero eigenvalues of $CC^{\mathsf T}$. Normalization fixes the dominant eigenvalue,
\begin{eqnarray}
\lambda_0&=&1-\varepsilon^{2}\sum_k c_k+O(\varepsilon^3),
\\
\sum_k c_k&=&\Tr(CC^{\mathsf T})=\sum_{a,b}C_{ab}^{2}=2|\bd|,
\end{eqnarray}
the last count being simply the number of oriented boundary bonds.

\subsection{From the spectrum to $S_A$}

We now assemble $S_A=-\lambda_0\ln\lambda_0-\sum_{k\ge1}\lambda_k\ln\lambda_k$ term by term. For the dominant eigenvalue, using $\lambda_0=1-2\varepsilon^2|\bd|$ and $\ln(1-x)=-x+O(x^2)$,
\begin{equation}
-\lambda_0\ln\lambda_0=2\varepsilon^{2}|\bd|+O(\varepsilon^{4}).
\end{equation}
For the small eigenvalues \cref{eq:eigen-smallapp},
\begin{align}
-\sum_{k\ge1}\lambda_k\ln\lambda_k
&=-\varepsilon^{2}\sum_k c_k\big[\ln\varepsilon^{2}+\ln c_k\big]+O(\varepsilon^{3})\nonumber\\
&=2\varepsilon^{2}|\bd|\,\ln\tfrac{1}{\varepsilon^{2}}
-\varepsilon^{2}\sum_k c_k\ln c_k+O(\varepsilon^{3}),
\end{align}
where we used $\sum_k c_k=2|\bd|$ in the first term. Adding the two contributions gives the master formula
\begin{equation}
S_A=\varepsilon^{2}\Big[\,2|\bd|\Big(1+\ln\tfrac{1}{\varepsilon^{2}}\Big)
-\sum_{k}c_k\ln c_k\Big]+O\!\left(\varepsilon^{3}\right).
\label{eq:masterapp}
\end{equation} 

The first term is proportional to the number of cut bonds $|\bd|$ and is the area law \cite{Eisert2010,Hastings2007}; the second, spectrum-dependent term is a subleading correction. Everything about the geometry of the cut enters through the single matrix $C$.

\subsection{Evaluating \cref{eq:masterapp} for each cut}

We now build $C$ explicitly. It is enough to track, for each oriented boundary bond, which orthogonal $A$-state it excites and which $B$-state accompanies it. We denote by $d_i$ a doublon at site $i$ and by $h_i$ a hole at site $i$.

\paragraph{Half-half.}
Take $A=\{1,\dots,N_s/2\}$. The boundary consists of the two bonds $(\tfrac{N_s}{2},\tfrac{N_s}{2}+1)$ and $(N_s,1)$, so $|\bd|=2$ and there are four class-(iii) terms. Their $A$-states $\{d_{N_s/2},h_{N_s/2},d_1,h_1\}$ are all distinct, and each pairs with a unique $B$-state $\{h_{N_s/2+1},d_{N_s/2+1},h_{N_s},d_{N_s}\}$ dictated by particle conservation. The pairing is therefore a bijection and $C$ is a permutation matrix. Ordering the four $A$-states as $(d_{N_s/2},h_{N_s/2},d_1,h_1)$ and the $B$-states as $(h_{N_s/2+1},d_{N_s/2+1},h_{N_s},d_{N_s})$,
\begin{equation}
C_{\mathrm{half}}=
\begin{pmatrix}
1&0&0&0\\
0&1&0&0\\
0&0&1&0\\
0&0&0&1
\end{pmatrix},
\qquad
C_{\mathrm{half}}C_{\mathrm{half}}^{\mathsf T}=\mathbb I_4 .
\end{equation}
Thus $c=(1,1,1,1)$ and $\sum_k c_k\ln c_k=0$, so \cref{eq:masterapp} gives
\begin{equation}
S_{\mathrm{half}}=4\varepsilon^{2}\Big(1+\ln\tfrac1{\varepsilon^{2}}\Big),
\label{eq:Smitadapp}
\end{equation}
which is independent of $N_s$, always has two bonds, for any $N_s$.

\paragraph{Even-odd.}
Take $A=\{1,3,\dots,N_s-1\}$ and write $m\equiv N_s/2$. Now every one of the $N_s$ bonds crosses the cut, so $|\bd|=N_s$ and there are $2N_s$ class-(iii) terms; the pairing is no longer bijective, because each doublon site in $A$ has 2 neighbors in $B$. Consider first the block in which the doublon sits in $A$ (a hole then sits on a neighboring $B$-site). Labelling the $A$-rows by $d_1,d_3,\dots,d_{2m-1}$ and the $B$-columns by $h_2,h_4,\dots,h_{2m}$, the doublon $d_{2r-1}$ connects to the holes on its two neighbors, giving a $1$ on the diagonal and a $1$ on the cyclic super-diagonal:
\begin{equation}
C^{(\mathrm I)}=\mathbb I_m+P,
\qquad
P=
\begin{pmatrix}
0&1&&&\\
&0&1&&\\
&&\ddots&\ddots&\\
&&&0&1\\
1&&&&0
\end{pmatrix},
\end{equation}
with $P$ the $m\times m$ cyclic permutation. As a circulant matrix, $C^{(\mathrm I)}$ is diagonalized by the Fourier modes \cite{HornJohnson} and has eigenvalues $1+\omega^{k}$, $\omega=e^{2\pi i/m}$, so
\begin{equation}
\operatorname{spec}\!\big(C^{(\mathrm I)}C^{(\mathrm I)\mathsf T}\big)
=\big\{\,|1+\omega^{k}|^{2}\,\big\}_{k=0}^{m-1}
=\Big\{\,4\cos^{2}\tfrac{\pi k}{m}\,\Big\}_{k=0}^{m-1}.
\end{equation}
The complementary block, with the doublon in $B$ and the hole in $A$, is identical in structure and gives the same spectrum. Since $d_i$ and $h_i$ are orthogonal Fock states, the two blocks do not mix and $CC^{\mathsf T}$ is block-diagonal; its spectrum is the union of the two, i.e.\ each value $4\cos^2(\pi k/m)$ with multiplicity two. As a check,
\[
\Tr(CC^{\mathsf T})=2\sum_{k=0}^{m-1}4\cos^{2}\tfrac{\pi k}{m}
=2\cdot 4\cdot\tfrac{m}{2}=4m=2N_s=2|\bd|,
\]
using $\sum_{k=0}^{m-1}\cos^2(\pi k/m)=m/2$. Substituting the doubled spectrum into \cref{eq:masterapp},
\begin{equation}
S_{\mathrm{even/odd}}=\varepsilon^{2}\!\left[2N_s\Big(1+\ln\tfrac1{\varepsilon^{2}}\Big)
-2\!\sum_{k=0}^{m-1}\!f\!\Big(\tfrac{\pi k}{m}\Big)\right],
\label{eq:Saltapp}
\end{equation}
with $f(x)=4\cos^{2}\!x\,\ln\!\big(4\cos^{2}\!x\big)$ and the convention $0\ln0=0$ (the mode $k=m/2$, where $\cos(\pi/2)=0$, contributes nothing). The leading term now scales as $2N_s$: the boundary is extensive and $S_{\mathrm{even/odd}}\propto N_s$, a volume law.

It is instructive to see both matrices for the smallest ring, $N_s=4$ ($m=2$). For the half cut $A=\{1,2\}$ one has, in the ordering above, $C_{\mathrm{half}}=\mathbb I_4$. For the even-odd cut $A=\{1,3\}$ the doublon-in-$A$ block is
\[
C^{(\mathrm I)}=\mathbb I_2+P
=\begin{pmatrix}1&1\\1&1\end{pmatrix},
\qquad
C^{(\mathrm I)}C^{(\mathrm I)\mathsf T}
=\begin{pmatrix}2&2\\2&2\end{pmatrix},
\]
with eigenvalues $\{4,0\}=\{4\cos^2 0,\,4\cos^2\tfrac{\pi}{2}\}$, exactly as predicted; together with the identical doublon-in-$B$ block the full spectrum is $\{4,4,0,0\}$.

\paragraph{Single site.}
For $A=\{1\}$ one again has $|\bd|=2$, but now the two boundary bonds $(1,2)$ and $(N_s,1)$ share the site $1$. The class-(iii) terms are: a doublon at $1$ paired with a hole at either neighbor ($2$ or $N_s$), and a hole at $1$ paired with a doublon at either neighbor. Ordering the two $A$-rows as $(d_1,h_1)$ and the four $B$-columns as $(h_2,h_{N_s},d_2,d_{N_s})$,
\begin{equation}
C_{\mathrm{site}}=
\begin{pmatrix}
1&1&0&0\\
0&0&1&1
\end{pmatrix},
\qquad
C_{\mathrm{site}}C_{\mathrm{site}}^{\mathsf T}
=\begin{pmatrix}2&0\\0&2\end{pmatrix}
=\operatorname{diag}(2,2).
\end{equation}
Hence $c=(2,2)$, $\sum_k c_k=4=2|\bd|$, and $\sum_k c_k\ln c_k=4\ln2$, so
\begin{equation}
S_{\mathrm{site}}=4\varepsilon^{2}\Big(1+\ln\tfrac1{\varepsilon^{2}}\Big)-4\ln 2\,\varepsilon^{2}.
\label{eq:Ssitioapp}
\end{equation}
The area-law term matches $S_{\mathrm{half}}$, both have $|\bd|=2$, and the two cuts differ only through the constant $-4\ln2\,\varepsilon^2$ coming from the shared vertex, consistent with the two boundaries being isomorphic as subgraphs.

\section{Details of the SF limit}
\label{app:sf}

This appendix derives the results of the main text in full. At $U=0$ the Hamiltonian is diagonalized by the zero-momentum single-particle mode
\begin{equation}
\bdag{0}=\frac{1}{\sqrt{N_s}}\sum_{j=1}^{N_s}\bdag{j},
\end{equation}
and the $N$-particle ground state is the condensate
\begin{equation}
\ket{\psi_0}=\frac{1}{\sqrt{N!}}\big(\bdag{0}\big)^{N}\ket{0},
\qquad E_0=-2tN.
\label{eq:becapp}
\end{equation}

\subsection{Schmidt decomposition}

Fix a bipartition $A|B$ with $|A|=\ell$. Define the two normalized sublattice creation operators
\begin{equation}
a_A^{\dagger}=\frac{1}{\sqrt{\ell}}\sum_{i\in A}\bdag{i},
\qquad
a_B^{\dagger}=\frac{1}{\sqrt{N_s-\ell}}\sum_{j\in B}\bdag{j},
\end{equation}
which inherit bosonic commutators $[a_A,a_A^{\dagger}]=[a_B,a_B^{\dagger}]=1$ and, since $A$ and $B$ are disjoint, $[a_A,a_B^{\dagger}]=0$. The condensate mode is a weighted sum of the two,
\begin{equation}
\bdag{0}=\sqrt{p}\,a_A^{\dagger}+\sqrt{q}\,a_B^{\dagger},
\qquad
p=\frac{\ell}{N_s},\quad q=1-p,
\end{equation}
the weights being simply the fraction of sites on each side. Raising to the $N$-th power with the binomial theorem (the two operators commute),
\begin{equation}
\big(\bdag{0}\big)^{N}
=\sum_{n=0}^{N}\binom{N}{n}\,p^{n/2}q^{(N-n)/2}\,
\big(a_A^{\dagger}\big)^{n}\big(a_B^{\dagger}\big)^{N-n}.
\end{equation}
Acting on $\ket{0}=\ket{0}_A\ket{0}_B$ and using $\big(a_X^{\dagger}\big)^{k}\ket{0}_X=\sqrt{k!}\,\ket{k}_X$,
\begin{equation}
\ket{\psi_0}
=\sum_{n=0}^{N}\frac{\binom{N}{n}\sqrt{n!\,(N-n)!}}{\sqrt{N!}}\,
p^{n/2}q^{(N-n)/2}\,\ket{n}_A\ket{N-n}_B .
\end{equation}
The prefactor simplifies, since $\binom{N}{n}\sqrt{n!(N-n)!}/\sqrt{N!}=\sqrt{\binom{N}{n}}$, leaving the Schmidt form
\begin{equation}
\ket{\psi_0}=\sum_{n=0}^{N}\sqrt{\binom{N}{n}p^{n}q^{N-n}}\;\ket{n}_A\ket{N-n}_B .
\label{eq:schmidtSFapp}
\end{equation}
The states $\{\ket{n}_A\}$ and $\{\ket{N-n}_B\}$ are orthonormal, so \cref{eq:schmidtSFapp} is already a valid Schmidt decomposition, and the Schmidt coefficients are the square roots of the binomial probabilities $\binom{N}{n}p^{n}q^{N-n}$. The reduced density matrix is therefore diagonal with these probabilities, and
\begin{eqnarray}
S_A&=&-\sum_{n=0}^{N}\binom{N}{n}p^{n}q^{N-n}
\ln\!\Big[\binom{N}{n}p^{n}q^{N-n}\Big]
\\
&=&H\!\big(\Bin(N,p)\big).
\end{eqnarray}
Crucially, this depends on the cut only through $p=\ell/N_s$, i.e.\ only through the size $\ell$ of $A$ and not through its geometry. At half filling of the sites, $\ell=N_s/2$ gives $p=\tfrac12$ for both the contiguous and the alternating cut, hence
\begin{equation}
S_{\mathrm{half}}=S_{\mathrm{even/odd}}=H\!\big(\Bin(N_s,\tfrac12)\big).
\end{equation}
The two entropies coincide not because of any lattice symmetry relating the cuts, but because the fully delocalized condensate cannot distinguish two subsets of equal size.

\subsection{Bounds and the logarithmic law}

Because the sum in \cref{eq:schmidtSFapp} runs over $n=0,\dots,N$, the Schmidt rank is at most $N+1$, giving the general bound
\begin{equation}
S_A\le\ln(N+1).
\end{equation}
For large $N$ the binomial distribution approaches a Gaussian of variance $Npq$, whose differential entropy yields
\begin{equation}
H\!\big(\Bin(N,p)\big)=\tfrac12\ln\!\big(2\pi e\,Npq\big)+O(N^{-1}).
\label{eq:gaussapp}
\end{equation}
At unit filling ($N=N_s$) and $p=q=\tfrac12$ this becomes
\begin{equation}
S_A\simeq\tfrac12\ln\!\Big(\tfrac{\pi e}{2}N_s\Big),
\label{eq:logleyapp}
\end{equation}
the logarithmic scaling characteristic of the SF phase \cite{Klich2009,Song2010}. Note that both cuts now grow with $N_s$, but only as $\tfrac12\ln N_s$, far slower than the MI volume law of the alternating cut.

\subsection{Single-body observables}

The same condensate state fixes the one-body observables used in the main text. The single-particle density matrix is uniform,
\begin{equation}
\rho_{ij}=\langle\bdag{i}\bop{j}\rangle=\frac{N}{N_s}\qquad\text{for all }i,j,
\end{equation}
so its largest eigenvalue is $N$ (the fully condensed mode) and the condensate fraction is $f_c=\lambda_{\max}/N=1$ \cite{PenroseOnsager1956,Yang}; the nearest-neighbor coherence saturates at $|\rho_{i,i+1}|=N/N_s=1$. Taking $\ell=1$ in \cref{eq:schmidtSFapp}, the occupation of a single site follows $\Bin(N,1/N_s)$, whose standard deviation is
\begin{equation}
\Delta(\nop{i})=\sqrt{N\,\tfrac{1}{N_s}\Big(1-\tfrac{1}{N_s}\Big)}
\;\xrightarrow{\,N=N_s\,}\;\sqrt{1-\tfrac{1}{N_s}},
\label{eq:dnSFapp}
\end{equation}
and the corresponding single-site entanglement entropy is $S_{\mathrm{site}}\to H\!\big(\Bin(N,1/N_s)\big)$.

\section{Details of the slave-boson method}
\label{app:sb}

This appendix collects the construction whose logic and central equations were given in the main text. We provide the optimization and constraint details of the mean-field state \cref{eq:sb-ansatz-main}, the explicit assembly of the BdG blocks in \cref{eq:sb-Hbdg-main}, and the leading-order reduction that connects the Gaussian entropy \cref{eq:S_gauss_main} to the MI perturbation theory.

\subsection{Mean-field state: optimization and constraints}
\label{app:sb-mf}

The local amplitudes of the two-sublattice state \cref{eq:sb-ansatz-main} minimize the energy per site at fixed mean density $\rho=N/N_s$, the Gutzwiller decoupling mean field for lattice bosons \cite{Sheshadri1993,vanOosten2001}. The minimization is performed by SLSQP subject to the self-consistent mean-field constraints (unit norm on each sublattice and fixed $\rho$). To set up the fluctuations, on each sublattice a local orthonormal basis is constructed whose first vector is the optimized state $\ket{\alpha}$; the remaining $n_{\max}$ excited states play the role of fluctuation bosons $\hat\gamma_{r,\alpha}$ ($\alpha=1,\dots,n_{\max}$), subject to the hard-core constraint $\sum_\alpha\hat\gamma^\dagger_{r,\alpha}\hat\gamma_{r,\alpha}=1$ that keeps a single physical state per site.

\subsection{Assembly of the BdG blocks}
\label{app:sb-bdg}

Collecting the fluctuation operators into $\hat\Gamma_k=(\hat\gamma_k,\hat\gamma_{k+\pi},\hat\gamma^\dagger_{ k},\hat\gamma^\dagger_{-k-\pi})$ brings the quadratic Hamiltonian to the block form \cref{eq:sb-Hbdg-main}. The blocks $A_k,B_k$ are assembled from the local matrix elements of $\hat b$, $\hat b^\dagger$ and $\hat n$ in the optimized basis and from the mean-field eigenenergies (the explicit even/odd, hopping, and long-range contributions are those of Ref.~\cite{Sharma2022}, adapted here to $d=1$ with structure factor $f_k=2\cos k$ and folding vector $\pi$). The matrix is diagonalized with the bosonic metric $\sigma=\mathrm{diag}(\mathbb I,-\mathbb I)$, i.e.\ by a paraunitary (Bogoliubov) transformation $T_k$ obeying $T_k^\dagger\sigma T_k=\sigma$ and $T_k^\dagger\mathcal H_{\mathrm{BdG}}(k)T_k=\mathrm{diag}(\omega_k,\omega_k)$, resulting in the physical frequencies $\omega_{k,p}\ge0$, the mean-field excitation spectrum of the lattice bosons \cite{Huber2007}. From the same transformation one obtains the correlation matrix $\mathcal C(k)$ of the ground state, which is then unfolded through \cref{eq:sb-unfold-main} and restricted to the sites of $A$ to give the reduced $\mathcal C_A$ entering \cref{eq:S_gauss_main}.

\subsection{Dominant-order and MI agreement}
\label{app:sb-mi}

In the MI phase there is no superfluid component ($\phi_{e,o}\approx0$) and all of $S_A$ comes from doublon-hole fluctuations, precisely those generated by the perturbation theory of the main text. The agreement is excellent over the entire $t/U$ range of this regime (\cref{tab:cmp_pert,tab:cmp_ratio}): the three methods coincide, and in particular the SB reproduces both the absolute value and the $N_s$-dependence of the ratio $S_{\mathrm{even/odd}}/S_{\mathrm{half}}\to N_s/2$.

This follows from the leading-order reduction \cref{eq:sb-leading-main} of the Gaussian entropy at small occupancies $n_m\ll1$, which matches the master formula \cref{eq:master} under $n_m\equiv\lambda_m=\varepsilon^2 c_m$. The entanglement-mode occupancies of the SB are then, mode for mode, the eigenvalues $\{c_k\}$ of the boundary matrix $CC^{\mathsf T}$ computed analytically in \cref{eq:Smitad,eq:Salt,eq:Ssitio}; the two constructions reconstruct the same spectrum. As noted in the main text, the mean field underestimates the MI-SF critical point, $(t/U)_c^{\mathrm{MF}}=\tfrac12(3-2\sqrt2)\approx0.086$, below the exact 1D value $(t/U)_c=0.305$ \cite{Ejima2011,Kuhner2000}; this limitation is well-known for mean-field theory and does not affect the MI or SF asymptotics.

\clearpage

\section{Supplementary figures}
\label{app:figures}

\begin{figure*}[h]
\centering
\includegraphics[width=0.9\textwidth]{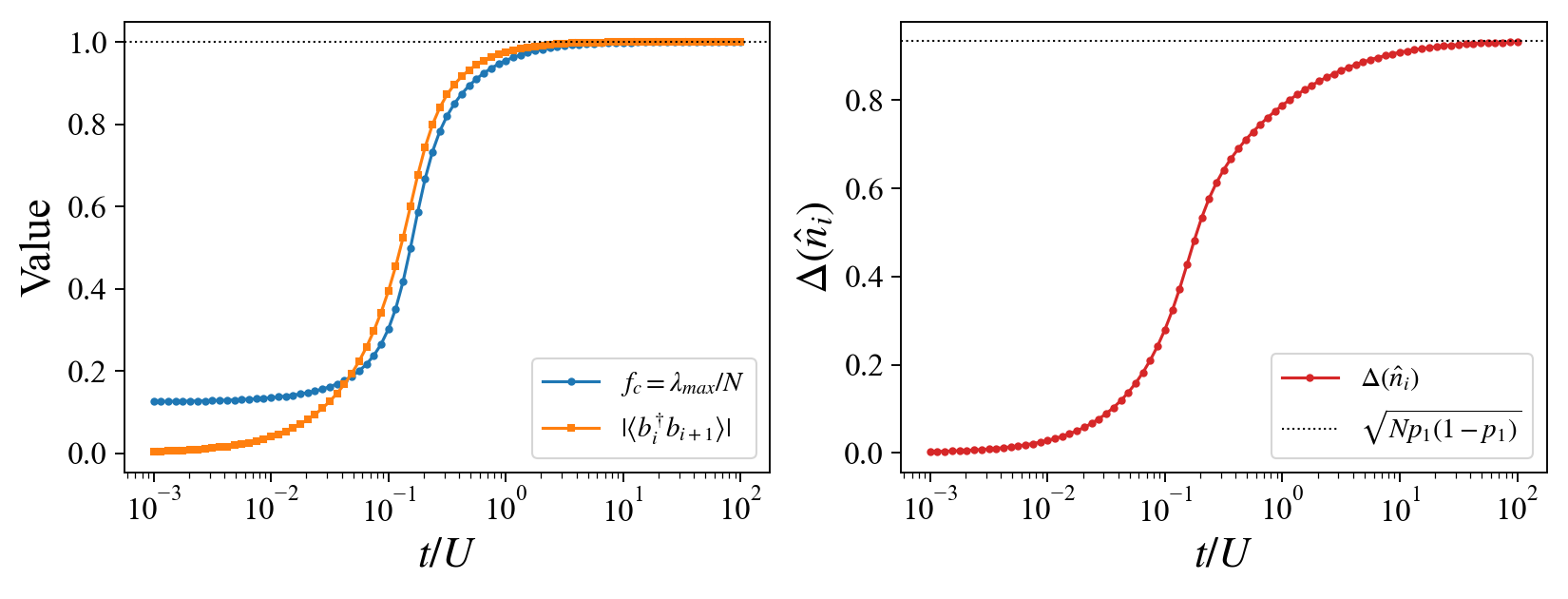}
\caption{Observables for $t/U$ from $10^{-4}$ to $10^{4}$, $N_s=N=8$. Left: condensate fraction and nearest-neighbor coherence. Right: on-site fluctuations.}
\label{fig:obscompleto}
\end{figure*}

\begin{figure}[h]
\centering
\includegraphics[width=0.6\columnwidth]{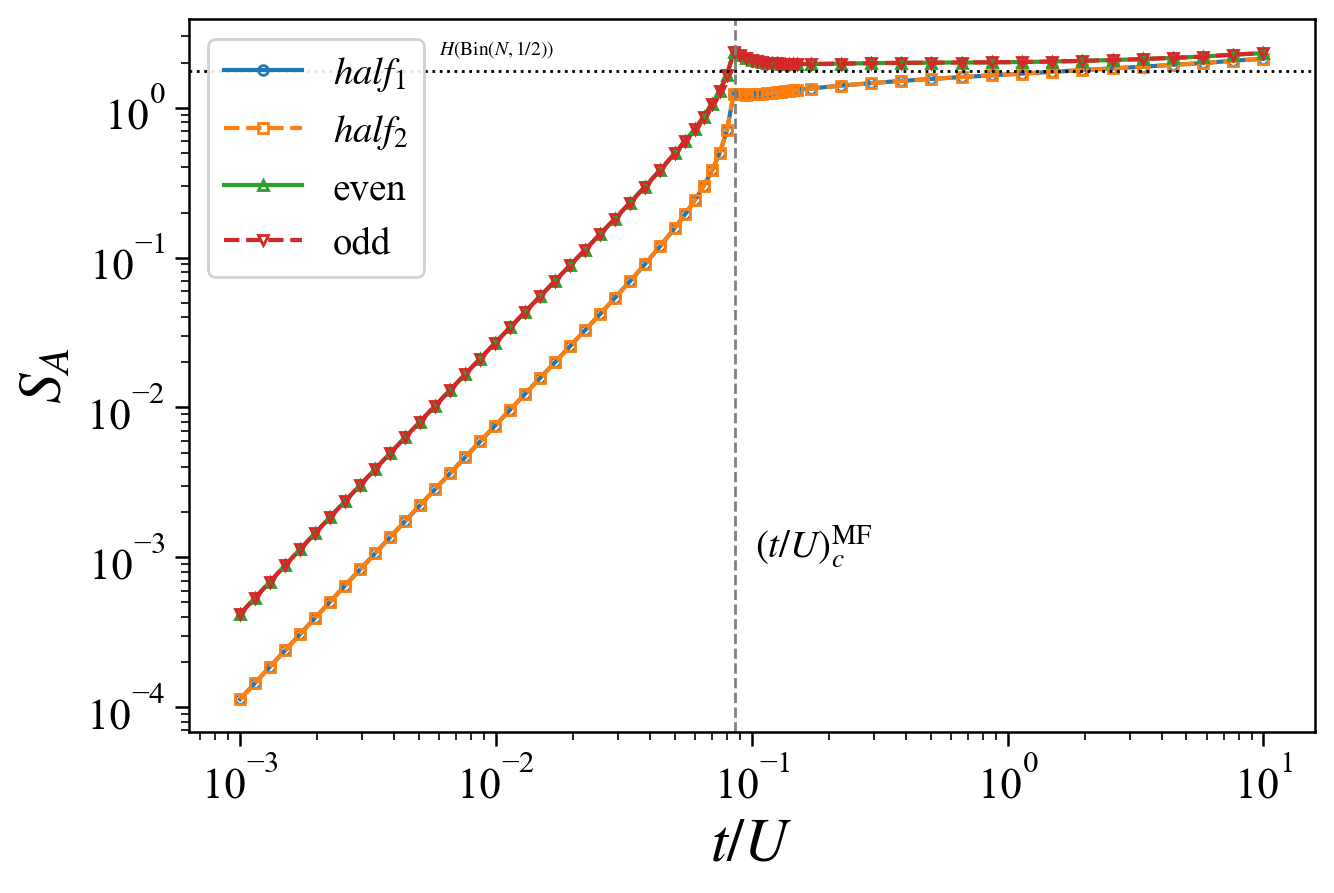}
\caption{The four bipartitions over the full range of $t/U$. The two exact equalities from the main texthold throughout. The two independent curves converge to the common value $H(\Bin(N,1/2))=1.6174$ for $N=8$.}
\label{fig:entcompleto}
\end{figure}

\begin{figure*}[t]
\centering
\includegraphics[width=0.9\textwidth]{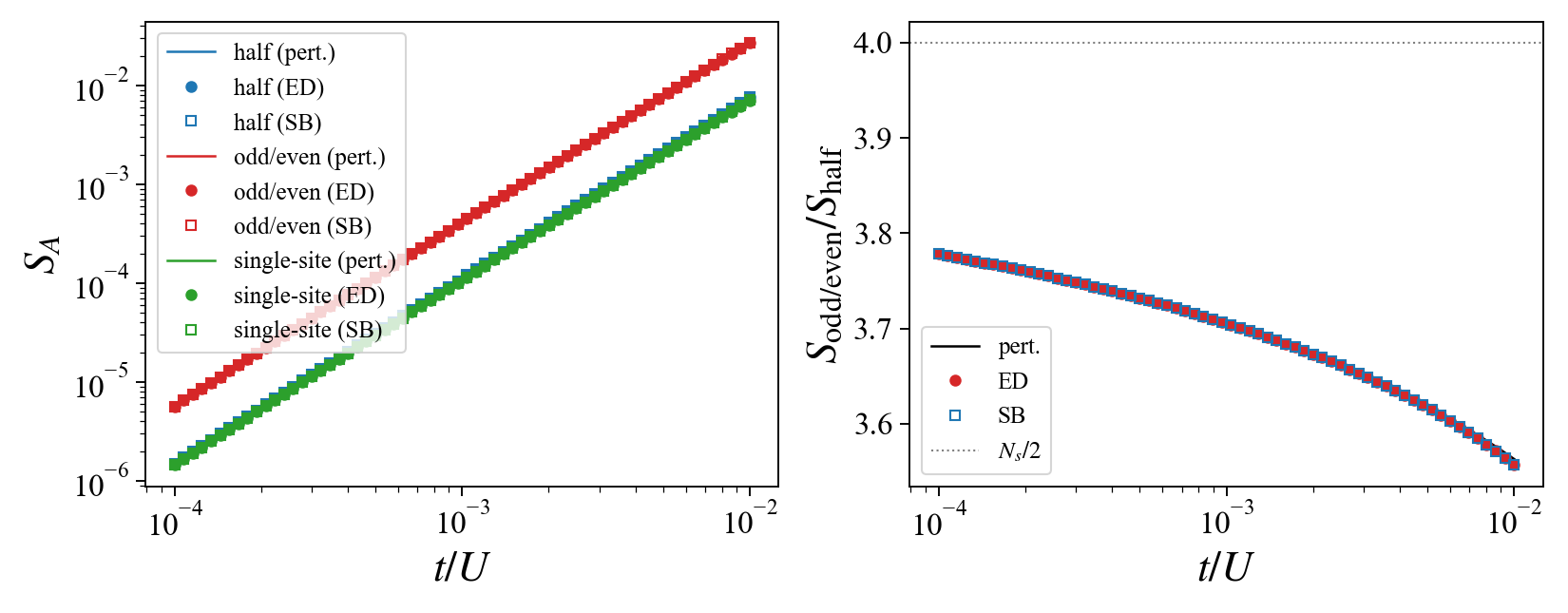}
\caption{MI regime, $N_s=N=8$. Comparison of exact diagonalization (ED), slave bosons (SB), and perturbation theory \cref{eq:master}. Left: $S_A$ for the half, even/odd, and single-site cuts. Right: ratio $S_{\mathrm{even/odd}}/S_{\mathrm{half}}$, tending to $N_s/2$.}
\label{fig:cmp_pert}
\end{figure*}

\begin{figure*}[t]
\centering
\includegraphics[width=0.9\textwidth]{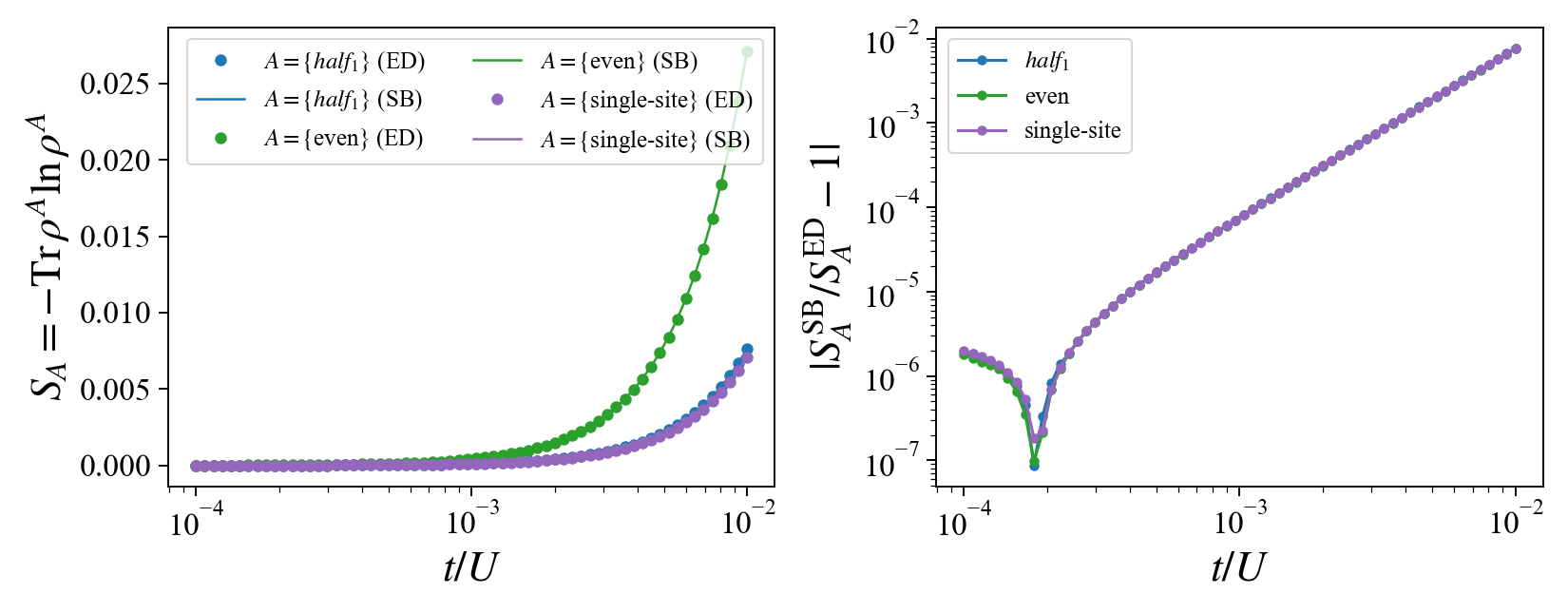}
\caption{MI regime. Left: $S_A$ from ED (symbols) and SB (lines) for the three cuts. Right: relative error $|S_A^{\mathrm{SB}}/S_A^{\mathrm{ED}}-1|$, small throughout the MI phase and growing slowly toward the transition, $N_s=N=8$.}
\label{fig:cmp_bipmi}
\end{figure*}
\clearpage
\section{Supplementary tables}
\label{app:tables}

\begin{table}[h!]
\centering
\caption{Ratio $S_{\mathrm{even/odd}}/S_{\mathrm{half}}$ in the MI regime ($N_s=N=8$): exact diagonalization (ED), slave bosons (SB), perturbation theory (pert.) \cref{eq:master}, and DMRG. All tend to $N_s/2=4$ in the deep MI.}
\label{tab:cmp_pert_ratio}
\begin{ruledtabular}
\begin{tabular}{lcccc}
$t/U$ & ED & SB & pert. & DMRG \\
\colrule
$1.00\times 10^{-2}$ & $3.5570$ & $3.5566$ & $3.5630$ & $3.5570$ \\
$1.79\times 10^{-2}$ & $3.4830$ & $3.4811$ & $3.5018$ & $3.4830$ \\
$3.22\times 10^{-2}$ & $3.3623$ & $3.3511$ & $3.4206$ & $3.3623$ \\
$5.78\times 10^{-2}$ & $3.1290$ & $3.0314$ & $3.3078$ & $3.1290$ \\
$8.96\times 10^{-2}$ & $2.7783$ & $1.8310$ & $3.1895$ & $2.7783$ \\
\end{tabular}
\end{ruledtabular}
\end{table}
\begin{table}[h!]
\centering
\caption{$S_A$ for the SF regime with $N_s=N=8$ ($t/U=100$), comparing simulation methods vs theory.}
\label{tab:sf}
\begin{ruledtabular}
\begin{tabular}{lcccc}
Quantity & ED & SB & DMRG & theoretical\\
\colrule
$S_{\mathrm{half}}$      & $1.7572$ & $2.6589$ & $1.6948$ & $H(\Bin(8,\tfrac12))=1.7635$\\
$S_{\mathrm{even/odd}}$  & $1.7627$ & $2.8319$ & $1.7873$ & $H(\Bin(8,\tfrac12))=1.7635$\\
$S_{\mathrm{site}}$      & $1.2681$ & $2.3003$ & ---        & $H(\Bin(8,\tfrac18))=1.2700$\\
\colrule
$\Delta(\nop{i})$    &   &$0.9128$ &$0.8124$ & $0.9128$\\
$f_c$                & &$1.0000$ & $0.9937$& $1$\\
$|\rho_{i,i+1}|$   &  & $1.0000$  &$1.0000$  &  $1$\\
\end{tabular}
\end{ruledtabular}
\end{table}

\begin{table}[h!]
\centering
\small
\caption{Scaling of the normalized ratio $(S_{\text{even/odd}}/S_{\text{half}})/N_s$ in the MI regime ($t/U=10^{-3}$).}
\label{tab:esc_mi}
\begin{ruledtabular}
\begin{tabular}{r ccc}
$N_s$ & ED & SB & DMRG\\
\colrule
$4$  & 0.4509 & 0.4509 & --      \\
$6$  & 0.4672 & 0.4672 & 0.4672\\
$8$  & 0.4631 & 0.4631 & 0.4631\\
$10$ & 0.4651 & 0.4651 & 0.4651\\
$12$ & --       & 0.4641 & 0.4641\\
$16$ & --       & 0.4644 & 0.4643\\
$32$ & --       & 0.4645 & 0.4645\\
$64$ & --       & 0.4645 & 0.4644\\
\end{tabular}
\end{ruledtabular}
\end{table}

\begin{table}[h!]
\centering
\small
\caption{Scaling of the ratio $S_{\text{even/odd}}/S_{\text{half}}$ in the SF regime ($t/U=10^{2}$).}
\label{tab:esc_sf}
\begin{ruledtabular}
\begin{tabular}{r ccc}
$N_s$ & ED & SB & DMRG\\
\colrule
$4$  & 1.0005 & 1.0111 & --      \\
$6$  & 1.0016 & 1.0050 & 1.0215 \\
$8$  & 1.0030 & 1.0097 & 1.0545\\
$10$ & 1.0048 & 1.0159 & 1.0935 \\
$12$ & --       & 1.0235 & 1.1354\\
$16$ & --       & 1.0434 & 1.2243 \\
$32$ & --       & 1.1703 & 1.5643 \\
$64$ & --       & 1.5437 & 1.6164\\
\end{tabular}
\end{ruledtabular}
\end{table}

\end{document}